\newtheorem{Proposition}{Proposition}
\newtheorem{proof1}{Proof of Proposition}
\begin{document}

\title{A new class of composite indicators: the penalized power means}

\author{ Francesca Mariani $^{1*}$ \and Mariateresa Ciommi $^{1}$ \and Maria Cristina Recchioni $^{1}$       
}

\institute{ \at
             $^{1}$ \quad Department of Economic and Social Sciences; Università Politecnica delle Marche, Ancona 60121, Italy
              \email{m.c.recchioni@staff.univpm.it, m.ciommi@staff.univpm.it}       
                        \and
           $^*$Correspondence \at
               \email{f.mariani@staff.univpm.it} 
}

\date{Received: date / Accepted: date}

\maketitle

\begin{abstract}
In this paper we propose a new aggregation method for constructing  composite indicators that is based on a penalization of the power means. The idea underlying this approach consists in multiplying the power mean by a factor that takes into account for  the horizontal heterogeneity among indicators with the aim of penalizing the units with larger heterogeneity. In order to measure this heterogeneity, we scale the vector of normalized indicators by their power means, we compute the variance of the scaled normalized  indicators transformed by means of the appropriate Box-Cox function, and we measure the heterogeneity as the counter image of this variance through the Box-Cox function. The resulting penalization factor can be interpreted as the relative error, or the loss of information, that we obtain substituting the vector of the normalized indicators with their power mean. This penalization approach has the advantage to be fully data-driven and to be coherent with the same principle underlying the power mean approach, that is the minimum loss of information principle as well as to allow for a more refined rankings. The penalized power mean of order one coincides with the Mazziotta Pareto Index.
\end{abstract}
\keywords{Composite indicator; Aggregation method; Minimum loss of information principle} 
\section{Introduction}
The construction of composite indicators consists in reducing a multidimensional phenomenon into a one-dimensional phenomenon aggregating the multiple dimensions of the phenomenon, namely the indicators, into a single one, namely the composite indicator. The resulting composite indicator, although more simple and manageable to interpret and understand, is less informative with respect the vector of the indicators. That is, necessary, the aggregation process involves a loss of information. Despite the effort of many authors to develop objective measures of information loss (see, among others, \cite{Zhou2006}, \cite{Zhou2009}, \cite{Zhou2010}),  choosing effective measure of the information loss is a crucial task and depends on the subjective preference of the decision maker.  \\
\indent A good aggregation should conjugate the two-fold opposite objectives of reducing the dimension of the phenomenon under investigation as well as of generating a reasonable  loss of information. The majority of the aggregation functions are based on minimizing the loss for deviation of individual indicators from the aggregated value. In their paper  Calvo and Beliakov \cite{Calvo2010} systemize the various type of aggregation function, proving that any averaging aggregation functions are penalty-based functions, that is functions obtained minimizing the penalty  due to substitute the vectors of indicators with the aggregated value. Specifically, the power means are found choosing as penalty (loss) function the Euclidean distance from the vector of indicators transformed through a Box-Cox function (for more detail we refer to Berger and Casella \cite{Berger1992}. In other words, the power means can be seen as the least squares estimate of the vector of the indicators in the Box-Cox transformed space. Therefore, in the transformed space the power mean suffers from the same troubles of the arithmetic mean, that is the compensability and the substitutability. These issues, that are consequences of the loss of information, should be taken into account in the aggregation phase in order to differentiate units with same value of power mean.\\ 
\indent The idea of penalizing units in different way is shared by Mauro et al. \cite{Mauro2018} and Biggeri et al. \cite{Biggeri2019} that develop and apply in the well-being context the Multidimensional Synthesis of Indicators (MSI) approach, that aggregate the indicators relative to different units with a power mean with different order. The order of the power mean is a function of the arithmetic mean of the indicators, in a way such that units with lower indicator arithmetic mean have associated a lower order. Moreover, Rogge \cite{Rogge2018} applies a  Benefit of Doubt (BoD)-based weighted version of the power means for constructing an HDI index. The weights associated to the indicators depend on the unit in order to take into account for the difference in the indicator unbalance.    In line with the works of Mauro et al. \cite{Mauro2018}, Biggeri et al. \cite{Biggeri2019}  and Rogge \cite{Rogge2018} we propose a new aggregation approach that penalizes the power mean associated to a unit with a factor hat takes into account for  the horizontal heterogeneity among indicators with the aim of penalizing the units with larger heterogeneity. In order to measure this heterogeneity, we scale the vector of normalized indicators by their power means, we compute the variance of the scaled normalized  indicators transformed by means of the appropriate Box-Cox function, and we measure the heterogeneity as the counter image of this variance through the Box-Cox function. The resulting penalization factor can be interpreted as the relative error, or the loss of information, that we obtain substituting the vector of the normalized indicators with their power mean. The penalized power mean of order one coincides with the Mazziotta Pareto Index \cite{Mazziotta2016}. \\
\indent The penalization approach proposed in this paper is fully data-driven and can be effectively applied in many other fields, such as  environmental indices \cite{Sadiq2010}, fuzzy rule based systems, pattern recognition, decision making problems \cite{Khameneh2020} weighted voting systems \cite{Bustince2013}.\\
\indent The paper is organized as follows: in Section 1 we introduce the penalized power means, in Section 2 we prove some properties of this family. The Appendix contains the proof of the main proposition of Section 2.


\section{A new class of composite indicators}
Let $x_{ij}$ and $I_{ij}$ be, respectively, the value and  the normalized value of the indicator $j,$  $j=1,2,\ldots,m,$  relative to unit $i,$ $i=1,2,\ldots,m,$ such that $I_{ij}\in[0,1].$ Let $\underline{I}_i=[I_{i1}\,I_{i2}\,\cdots \, I_{im}]'$ be vector of indicators relative to the $i-$th unit, $i=1,2,\ldots,n,$ the power mean of order $p$ associated to $\underline{I}_{i}$  is  defined by:
\begin{eqnarray}\label{mean}
M_{p,i}=M_{p}(\underline{I}_i)=\left\{\begin{array}{ll}
\displaystyle \left(\frac{1}{m}\sum_{j=1}^m I_{ij}^p\right)^{\frac{1}{p}}, & p\neq 0,\\
\displaystyle  \left(\prod_{j=1}^m I_{ij}\right)^{\frac{1}{m}}, & p=0.
\end{array}\right.
\end{eqnarray}
The arithmetic mean, the geometric mean and the harmonic mean are special cases of the power mean for $p=1,$ $p=0$ and $p=-1,$ respectively. \\
\indent For $i=1,2,\ldots,n$ the composite indicator $M_{p,i}$  can be read as solution of the following optimization problem \cite{Berger1992}:
\begin{eqnarray}\label{opt}
\min_{c\in\mathcal A_p} F_p(c; \underline{I}_i),
\end{eqnarray}
where:
\begin{eqnarray}\label{lf}
F_p(c;\underline{I}_i)=\frac{1}{m}\sum_{j=1}^m (h_{p}(I_{ij})-h_{p}(c))^2
\end{eqnarray}
is the (information) loss function (or, the penalty function, to use the nomenclature of Calvo and Beliakov \cite{Calvo2010}) and  $h_p(x)$ is the Box-Cox transformation \cite{Box1964}:
\begin{eqnarray}\label{h}
h_{p}(x)=\left\{\begin{array}{ll}
\displaystyle \frac{x^p-1}{p}, & p\neq 0,\\
\ln x, & p=0.
\end{array}\right.
\end{eqnarray}
In (\ref{opt})  $\mathcal A_1=\mathbb R$ and $\mathcal A_p=\{x\in\mathbb R\ : \ x>0\}$   denote, respectively, the domain of the function $h_p$ for $p=1$ and $p\neq 1.$  Note that the function $h_p(1)=0$ and $h_p(x)$  is a strictly  increasing function of $x$. \\ 
\indent The function $F_p$ in (\ref{lf}) measures the sample (biased) variance of vector $h_p(\underline I_i)$ $= [h_p(I_{i1})\,h_p(I_{i2})\,\cdots \, h_p(I_{im})]'$ from its arithmetic mean:
\begin{eqnarray}
h_p(M_{p,i})=M_1(h_p(\underline{I}_i))=\frac{1}{m}\sum_{j=1}^m h_p(I_{ij}),
\end{eqnarray}
that is in the $p-$transformed space (the space obtained transformed the m dimensional vectors via the Box-Cox function of order $p$) the $p$-order generalized mean plays the role of arithmetic mean.\\
\indent Moreover, for any unit $i$ we can measure the error (loss of information) caused from substituting the vector of indicators $h_p(\underline{I}_i)$ with $h_p(M_{p,i})$ evaluating the objective function $F_p$ at its optimum:
\begin{eqnarray}
F_p(M_{p,i}; \underline{I}_i)=\frac{1}{m}\sum_{j=1}^m (h_{p}(I_{ij})-h_{p}(M_{p,i}))^2.
\end{eqnarray}
It is easy to see that this error coincides with the (biased) sample variance of the vector $h_p(\underline{I}_i),$ that here and in the rest of paper we denote by $S^2_{p,i}.$ Therefore  the quantity  $h_p^{-1}(S_{p,i}^2)$
 is a measure of the information loss caused from substituting  $\underline{I}_{i}$ with $M_{p,i}.$  Note that the size of $h_p^{-1}(S_{p,i}^2)$ depends strongly on $M_{p,i},$  
then the variances associated to  units with different power means are not comparable.  \\
\indent In order to  measure the relative information loss caused from substituting $\underline{I}_{i}$ with $M_{p,i},$ it is necessary to scale the indicators referring to a same unit by a specific
criterion that removes the dependence from the power mean. To this purpose,  let us  consider the vector of scaled normalized indicators, $\underline{\tilde I}_i=[\tilde{I}_{i1}\, \tilde{I}_{i2}, \cdots \, \tilde{I}_{in}]^\top,$ where:
\begin{eqnarray}\label{rel_err}
\tilde{I}_{ij}=\frac{I_{ij}}{M_{p,i}},\quad i=1,2,\ldots,n.
\end{eqnarray}
Note that from the homogeneity property of the $p$-order power means it follows that $M_p(\underline{\tilde I}_i)=1,$  for  $i=1,2,\ldots,n,$ then the the error (loss of information) caused from substituting the vector of indicators $h_p(\underline{\tilde I}_i)$ with $h_p(M_p(\underline{\tilde I}_i))=h_p(1)=0$ is given by:
\begin{eqnarray}\label{S}
\tilde S_{p,i}^2=\frac{1}{m}\sum_{j=1}^m (h_p(\tilde{I}_{ij})-h_p(1))^2=\frac{1}{m}\sum_{j=1}^m (h_p(\tilde{I}_{ij}))^2,\quad i=1,2,\ldots,n.
\end{eqnarray} 
For $i=1,2,\ldots,n,$ the quantity 
\begin{eqnarray}
 h_p^{-1}(\tilde S_{p,i}^2)=\left\{\begin{array}{ll}
\left(\displaystyle 1 + p\, \tilde S_{p,i}^2\right)^{\frac{1}{p}},& p\neq 0,\\
&\\
\exp\left(\tilde S_{0,i}^2\right), &p=0,
\end{array}\right.
\end{eqnarray}
 is independent from the size of $\tilde M_{p,i}$ and measures the relative information loss caused from substituting $\underline{I}_{i}$ with $M_{p,i}.$  The higher the value of $h_p^{-1}(\tilde S_{p,i}^2),$ the higher  the loss of information caused from considering $M_{p,i}$ instead of the sub-indicator vector $\underline{I}_{i},$ no matter the value of $M_{p,i}.$  \\ We use $h_p^{-1}(\pm \tilde S_{p,i}^2)$ to penalize the power mean of order $p.$ In particular, for $i=1,2,\ldots,n,$ the penalized power mean of order $p$ associated to the normalized indicator vector $\underline{I}_i$ is defined by: 
\begin{eqnarray}\label{pm}
PM_{p,i}^{\pm}=M_{p,i}h_p^{-1}(\pm \tilde S_{p,i}^2)=\left\{\begin{array}{ll}
M_{p,i}\left(\displaystyle 1 \pm p\, \tilde S_{p,i}^2\right)^{\frac{1}{p}},& p\neq 0,\\
&\\
M_{0,i}\exp\left(\pm \tilde S_{0,i}^2\right), &p=0.
\end{array}\right.
\end{eqnarray}
Note that the $\pm$ sign in (\ref{pm}) depends on the type of phenomenon considered, if increasing variations of the indicator correspond to positive variations of the phenomenon (positive polarity) we choose the sign $-,$ otherwise (negative polarity) we choose the sign $+.$ \\
\indent The scaling done in (\ref{rel_err}) ensures that the  term $h_p^{-1}(\pm\tilde S_{p}^2(\underline{I}_i))$ penalizes the score of each unit (the $p$-order power mean of the normalized indicators) independently from the value of the power mean itself with a quantity  that is directly proportional to the ``horizontal variability'' of the indicators. The aim of the penalization is to favour the units that, power mean being equal, have a greater balance among the indicators. This is the idea underlying the ``Method of Penalties by Coefficient of Variation”, introduced by Mazziotta and Pareto \cite{Mazziotta2016}, that  adjust the arithmetic mean by a penalization coefficient that is function, for each
unit, of the indicator coefficient of variation to define the Mazziotta Pareto Index (MPI).
\begin{Proposition}\label{prop1}
The penalized power mean of order one, $PM_{1,i}^{\pm},$  coincides with the MPI.  
\end{Proposition}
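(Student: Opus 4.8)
The plan is to substitute $p=1$ into the definition (\ref{pm}) of the penalized power mean and to recognize, term by term, the Mazziotta--Pareto construction. First I would record the elementary facts valid for $p=1$: the Box--Cox map (\ref{h}) becomes the affine function $h_1(x)=x-1$, whose inverse is $h_1^{-1}(y)=y+1$, and $M_{1,i}$ is simply the arithmetic mean $\frac{1}{m}\sum_{j=1}^m I_{ij}$ of the normalized indicators of unit $i$.

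Next I would evaluate the penalty term $\tilde S_{1,i}^2$. Using (\ref{S}) together with $\tilde I_{ij}=I_{ij}/M_{1,i}$ from (\ref{rel_err}) and $h_1(x)=x-1$, a one-line computation gives
\begin{eqnarray}
\tilde S_{1,i}^2 &=& \frac{1}{m}\sum_{j=1}^m\left(\frac{I_{ij}}{M_{1,i}}-1\right)^2 = \frac{1}{M_{1,i}^2}\cdot\frac{1}{m}\sum_{j=1}^m\left(I_{ij}-M_{1,i}\right)^2 = \frac{S_{1,i}^2}{M_{1,i}^2},
\end{eqnarray}
so that $\tilde S_{1,i}^2$ is exactly the square of the coefficient of variation $CV_i:=S_{1,i}/M_{1,i}$ of the indicators of unit $i$ (here $S_{1,i}^2$ is the biased sample variance, consistently with the notation fixed after (\ref{S})). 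Plugging this into (\ref{pm}) and using $h_1^{-1}(y)=y+1$ yields
\begin{eqnarray}
PM_{1,i}^{\pm} &=& M_{1,i}\,h_1^{-1}(\pm\tilde S_{1,i}^2)=M_{1,i}\left(1\pm \tilde S_{1,i}^2\right)=M_{1,i}\left(1\pm CV_i^2\right)=M_{1,i}\pm S_{1,i}\,CV_i.
\end{eqnarray}

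Finally I would compare the last expression with the definition of the MPI in Mazziotta and Pareto \cite{Mazziotta2016}, i.e. the arithmetic mean of the indicators corrected by the penalty $S_{1,i}\,CV_i$, the sign being $-$ for positive-polarity phenomena and $+$ for negative-polarity ones, and conclude that the two coincide. The argument is a direct substitution, so there is no genuine obstacle; the only points deserving care are bookkeeping ones: checking that the ``$-$'' and ``$+$'' branches of (\ref{pm}) match the polarity convention of the MPI exactly as noted after (\ref{pm}), and noting that the normalization scheme adopted for the $I_{ij}$ is the one underlying the MPI, so that $M_{1,i}$, $S_{1,i}$ and $CV_i$ here are literally the mean, the standard deviation and the coefficient of variation entering the MPI formula.
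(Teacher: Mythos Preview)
Your proposal is correct and follows essentially the same route as the paper: both specialize (\ref{S}) to $p=1$ to obtain $\tilde S_{1,i}^2=S_{1,i}^2/M_{1,i}^2$ and then plug this into (\ref{pm}) to recover $PM_{1,i}^{\pm}=M_{1,i}\bigl(1\pm S_{1,i}^2/M_{1,i}^2\bigr)$, which is the MPI. Your version is slightly more explicit (writing $h_1^{-1}$ and rewriting the result as $M_{1,i}\pm S_{1,i}\,CV_i$), but the argument is the same direct substitution.
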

\begin{proof} For the proof  see the Appendix.
\end{proof}
\medskip
\begin{Proposition}\label{prop2}
The penalized power mean defined in (\ref{pm}) satisfies the following properties:
\begin{itemize}
\item[1.] $PM_{p,i}^{+}\geq M_{p,i} \geq PM_{p,i}^{-}$.\\
\item[2.] $PM_{p,i}^{+}=PM_{p,i}^{-}=M_{p,i}, \text{ if and only if }\tilde S_{p,i}=0$.\\
\item[3.] $(PM_{p,i}^+)^p=2(M_{p,i})^p-(PM_{p,i}^-)^p$ for $p\neq 0$. \\
\item[4.] $PM_{0,i}^+=PM_{0,i}^{-}\exp\left\{2\, \tilde S_{0,i}^2\right\}$.\\
\item[5.] Given two units $k$ and $h$ ($k\neq h$) with $M_{p,k}=M_{p,h},$ we have:
\begin{align*}
PM_{p,k}^->PM_{p,h}^-&\quad \text{ iff }\quad \tilde S_{p,h}^2>\tilde S_{p,k}^2,\\
PM_{p,k}^+>PM_{p,h}^+&\quad\text{ iff }\quad \tilde S_{p,k}^2>\tilde S_{p,h}^2.
\end{align*}
\item[6.] Given two units $k$ and $h$ ($k\neq h$) with $M_{p,k}>M_{p,h},$ for $p\neq 0,$ we have:
\begin{align*}
PM_{p,k}^->PM_{p,h}^-&\quad \text{ iff } \quad M_{p,k}^p-M_{p,h}^p>p\left(M_{p,k}^p \, \tilde S_{p,k}^2-M_{p,h}\, \tilde S_{p,h}^2\right),\\
PM_{p,k}^+>PM_{p,h}^+& \quad \text{ iff } \quad M_{p,k}^p-M_{p,h}^p>p\left(M_{p,h}^p \, \tilde S_{p,h}^2-M_{p,k}^p\, \tilde S_{p,k}^2\right).
\end{align*}
\item[7.] Given two units $k$ and $h$ ($k\neq h$) with $M_{0,k}>M_{0,h},$ we have:
\begin{align*}
PM_{0,k}^->PM_{0,h}^-&\quad \text{ iff } \quad \frac{M_{0,k}}{M_{0,h}}>\exp\left\{\tilde S_{0,k}^2-\tilde S_{0,h}^2\right\},\\
PM_{0,k}^+>PM_{0,h}^+&\quad \text{ iff } \quad \frac{M_{0,k}}{M_{0,h}}>\exp\left\{\tilde S_{0,h}^2-\tilde S_{0,k}^2\right\}.
\end{align*}
\item[8.] $\underset{p \to -\infty}{\lim}  PM_{p,i}^{\pm}=\underset{j=1,2,\ldots,m}{\min}I_{ij}.$
\item[9.] $\underset{p \to +\infty}{\lim}  PM_{p,i}^{\pm}=\underset{j=1,2,\ldots,m}{\max}I_{ij}.$
\end{itemize}
\end{Proposition}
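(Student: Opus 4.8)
The plan is to derive all nine items from the single representation $PM_{p,i}^{\pm}=M_{p,i}\,h_p^{-1}(\pm\tilde S_{p,i}^{2})$ in~(\ref{pm}), using only three facts already recorded: $h_p$ (hence $h_p^{-1}$) is strictly increasing on its domain; $h_p(1)=0$, so $h_p^{-1}(0)=1$; and $\tilde S_{p,i}^{2}\ge 0$. Throughout I assume, as is implicit in the definitions, that $M_{p,i}>0$ (so the scaled indicators $\tilde I_{ij}$ are well defined) and that the penalized means are real, which for $p>0$, $p\ne 1$ amounts to $p\,\tilde S_{p,i}^{2}\le 1$.

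\emph{Items 1 and 2.} Since $+\tilde S_{p,i}^{2}\ge 0\ge -\tilde S_{p,i}^{2}$, monotonicity of $h_p^{-1}$ gives $h_p^{-1}(+\tilde S_{p,i}^{2})\ge 1\ge h_p^{-1}(-\tilde S_{p,i}^{2})$; multiplying through by $M_{p,i}\ge 0$ gives item~1, and \emph{strict} monotonicity forces equality throughout exactly when $\tilde S_{p,i}^{2}=0$, i.e.\ $\tilde S_{p,i}=0$, which is item~2. \emph{Items 3 and 4} follow by direct substitution of the closed forms in~(\ref{pm}): for $p\ne 0$, raising $PM_{p,i}^{\pm}$ to the power $p$ gives $(PM_{p,i}^{\pm})^{p}=M_{p,i}^{p}(1\pm p\,\tilde S_{p,i}^{2})$, and adding the $+$ and $-$ versions produces $2M_{p,i}^{p}$, which rearranges to item~3; for $p=0$, the quotient $PM_{0,i}^{+}/PM_{0,i}^{-}$ equals $\exp(2\tilde S_{0,i}^{2})$, which is item~4. \emph{Item 5}: the hypothesis $M_{p,k}=M_{p,h}>0$ lets one cancel the common factor, so the comparison of $PM_{p,k}^{-}$ with $PM_{p,h}^{-}$ (and, separately, of the $+$ versions) reduces to comparing $h_p^{-1}(-\tilde S_{p,k}^{2})$ with $h_p^{-1}(-\tilde S_{p,h}^{2})$, and strict monotonicity of $h_p^{-1}$ turns this into the stated inequality between variances.

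\emph{Items 6 and 7} compare units with different power means, so the common factor can no longer be cancelled and one must compare the products $M\,h_p^{-1}(\pm\tilde S^{2})$ directly. For $p=0$ (item~7) I would take logarithms: $\ln PM_{0,i}^{\pm}=\ln M_{0,i}\pm\tilde S_{0,i}^{2}$, so that $PM_{0,k}^{\pm}>PM_{0,h}^{\pm}$ iff $\ln(M_{0,k}/M_{0,h})>\mp(\tilde S_{0,k}^{2}-\tilde S_{0,h}^{2})$, which is the claim after exponentiating. For $p\ne 0$ (item~6) I would apply the strictly monotone map $t\mapsto t^{p}$ to both sides of the inequality $PM_{p,k}^{\pm}>PM_{p,h}^{\pm}$ and substitute $(PM_{p,i}^{\pm})^{p}=M_{p,i}^{p}(1\pm p\,\tilde S_{p,i}^{2})$; expanding and collecting terms yields the displayed inequalities. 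The one delicate point here is the sign of $p$: $t\mapsto t^{p}$ preserves order for $p>0$ but reverses it for $p<0$ (and, at the same time, $M_{p,k}>M_{p,h}$ passes to $M_{p,k}^{p}<M_{p,h}^{p}$), so the direction of the equivalence must be tracked carefully through these reversals.

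\emph{Items 8 and 9} are the substantive part. The power means obey the classical limits $M_{p,i}\to\min_{j}I_{ij}$ as $p\to-\infty$ and $M_{p,i}\to\max_{j}I_{ij}$ as $p\to+\infty$, so it suffices to show that the penalization factor $h_p^{-1}(\pm\tilde S_{p,i}^{2})$ tends to $1$ in both regimes. The apparent difficulty is an indeterminate form: $\tilde S_{p,i}^{2}\to 0$, yet in $(1\pm p\,\tilde S_{p,i}^{2})^{1/p}$ it is multiplied by $p\to\pm\infty$. I would resolve this by observing that, since $\tilde I_{ij}^{p}=I_{ij}^{p}/M_{p,i}^{p}$ and $M_{p,i}^{p}=\tfrac1m\sum_{k}I_{ik}^{p}$,
\[
h_p(\tilde I_{ij})=\frac{1}{p}\left(\frac{m\,I_{ij}^{p}}{\sum_{k=1}^{m}I_{ik}^{p}}-1\right).
\]
As $p\to\pm\infty$ the parenthesised quantity stays bounded — it tends to $m/r-1$ when $I_{ij}$ is the relevant extremal value, $r$ being its multiplicity, and to $-1$ otherwise — so $p\,\tilde S_{p,i}^{2}$ is a finite average of terms each of order $1/p$ and hence tends to $0$. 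Consequently $1\pm p\,\tilde S_{p,i}^{2}\to 1$ and $\tfrac1p\ln(1\pm p\,\tilde S_{p,i}^{2})\to 0$, that is $h_p^{-1}(\pm\tilde S_{p,i}^{2})\to 1$; multiplying by the limit of $M_{p,i}$ gives items 8 and 9. I expect this asymptotic estimate for $\tilde S_{p,i}^{2}$ to be the main obstacle: once the factor $1/p$ has been pulled out explicitly, everything else is bookkeeping.
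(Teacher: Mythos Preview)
Your proposal is correct and, for the only parts the paper actually proves (items 8 and 9), takes the same route: rewrite $\tilde S_{p,i}^{2}$ so that the factor $1/p^{2}$ is explicit, observe that the remaining sum is bounded (because the terms $(I_{ij}/M_{p,i})^{p}$ are nonnegative and average to~$1$, hence lie in $[0,m]$), conclude $p\,\tilde S_{p,i}^{2}\to 0$, and combine with the classical limits of $M_{p,i}$. Your boundedness argument is in fact spelled out more carefully than the paper's one-line ``$\lim p\,\tilde S_{p,i}^{2}=\lim 1/p$''; and your treatment of items 1--7, which the paper leaves unproved, is fine --- including your caveat about tracking the order reversal under $t\mapsto t^{p}$ for $p<0$ in item~6.
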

\begin{proof} For the proof of Properties 8 and 9 see the Appendix.
\end{proof}
\medskip
Note that Properties 8 and 9 in Proposition \ref{prop2} imply that the penalization has not effect when the power mean of order $-\infty$ (i.e. the minimum function) or $+\infty$ (i.e. the maximum function) is considered.
In fact the minimum and the maximum functions realize already, respectively, in the case of positive polarity and in the case of negative polarity, the maximum penalization for unbalanced values of the indicators, therefore they do not need further penalizations. 
\begin{Proposition}\label{prop3}
The penalization factor in (\ref{pm}):
\begin{eqnarray}\label{p_factor}
g_{p,i}^{\pm}=(1\pm \tilde S_{p,i}^2)^{\frac{1}{p}}
\end{eqnarray}
satisfies the following properties:
\begin{itemize}
\item[1.] $\displaystyle\frac{ \partial g_{p,i}^+}{\partial p}<0$ for $p>0$ and $\displaystyle\frac{\partial g_{p,i}^+}{\partial p}>0$ for $p<0.$
\item[2.] $\displaystyle\frac{\partial g_{p,i}^-}{\partial p}>0$ for $p>0$ and $\displaystyle\frac{\partial g_{p,i}^+}{\partial p}<0$ for $p<0.$
\item[3.] $\underset{p \to 0}{\lim}g_{p,i}^{\pm}=\exp\left\{\pm\tilde S_{0,i}^2\right\}.$
\item[4.] $\underset{p \to \pm\infty}{\lim}g_{p,i}^{\pm}=1.$
\end{itemize}
\end{Proposition}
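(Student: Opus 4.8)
Write $s_p:=\tilde S_{p,i}^{2}\geq 0$, so that for $p\neq 0$ the penalization factor is $g_{p,i}^{\pm}=h_p^{-1}(\pm s_p)=(1\pm p\,s_p)^{1/p}$ and $\ln g_{p,i}^{\pm}=\tfrac1p\ln(1\pm p\,s_p)$, while $g_{0,i}^{\pm}=e^{\pm s_0}$. The plan is to dispatch items 3 and 4 first (they are short) and then reduce items 1 and 2 to a single scalar inequality. Two elementary facts are used throughout: the identity $g_{p,i}^{\pm}=PM_{p,i}^{\pm}/M_{p,i}$, read straight off (\ref{pm}); and the two-sided bound $0\leq p^{2}s_p\leq m-1$, obtained by writing $p^{2}s_p=m\sum_j w_j^{2}-1$ with $w_j=I_{ij}^{p}/\sum_k I_{ik}^{p}$ (so $\sum_j w_j=1$, $0\le w_j\le1$) and using Cauchy--Schwarz for the lower bound and $\sum_j w_j^{2}\le\max_j w_j$ for the upper one; equivalently $p^{2}s_p=R(p)-1$, where $R(p):=M_{2p,i}^{2p}/M_{p,i}^{2p}\ge 1$ (i.e. $R(p)=\frac{m^{-1}\sum_j I_{ij}^{2p}}{(m^{-1}\sum_j I_{ij}^{p})^{2}}$).

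\textbf{Properties 3 and 4.} For item 3, the bound gives $p\,s_p=(R(p)-1)/p\to0$ as $p\to0$, and a Taylor expansion of the (log-convex) map $q\mapsto\tfrac1m\sum_j I_{ij}^{q}$ about $q=0$ shows $R(p)-1=p^{2}\sigma^{2}+O(p^{3})$ with $\sigma^{2}=\tfrac1m\sum_j(\ln I_{ij}-\ln M_{0,i})^{2}$, so $s_p\to\sigma^{2}=\tilde S_{0,i}^{2}$; hence $\ln g_{p,i}^{\pm}=\tfrac1p\ln(1\pm p\,s_p)\to\pm s_0$ by the standard limit $\tfrac1t\ln(1\pm ts)\to\pm s$, i.e. $g_{p,i}^{\pm}\to e^{\pm s_0}$. (One may instead note $g_{0,i}^{\pm}=PM_{0,i}^{\pm}/M_{0,i}=e^{\pm s_0}$ from (\ref{pm}) and merely check continuity of $p\mapsto g_{p,i}^{\pm}$ at $0$.) For item 4, $|p\,s_p|\le(m-1)/|p|\to0$ as $p\to\pm\infty$, so $1\pm p\,s_p\to1$ and $\ln g_{p,i}^{\pm}=\tfrac1p\ln(1\pm p\,s_p)\to0$, i.e. $g_{p,i}^{\pm}\to1$; alternatively combine Properties 8--9 of Proposition \ref{prop2} with the classical limits $M_{p,i}\to\min_j I_{ij}$ and $M_{p,i}\to\max_j I_{ij}$ (as $p\to\mp\infty$) to get $g_{p,i}^{\pm}=PM_{p,i}^{\pm}/M_{p,i}\to1$, using $I_{ij}>0$ (which is anyway needed for $M_{p,i}$ when $p\le0$).

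\textbf{Properties 1 and 2.} Put $T(p):=p\,s_p$ and $\ell^{\pm}(p):=\ln g_{p,i}^{\pm}=\tfrac1p\ln(1\pm T)$. Differentiating,
\[
(\ell^{\pm})'(p)=-\frac{1}{p^{2}}\ln(1\pm T)\ \pm\ \frac{T'(p)}{p\,(1\pm T)},
\]
and since $1\pm T=(g_{p,i}^{\pm})^{p}>0$, multiplying by $p^{2}(1\pm T)>0$ gives $\mathrm{sign}\,(\ell^{\pm})'(p)=\mathrm{sign}\big(\pm p\,T'(p)-(1\pm T)\ln(1\pm T)\big)$. So items 1--2 reduce to: for the $+$ factor, $p\,T'(p)<(1+T)\ln(1+T)$ when $p>0$ and $p\,T'(p)>(1+T)\ln(1+T)$ when $p<0$, and the mirror inequalities for the $-$ factor. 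Using $T(p)=(R(p)-1)/p$ one has the convenient identity $p\,T'(p)=R'(p)-T(p)$, so everything is phrased through $R(p)$, $R'(p)$ and $\ln(1\pm T)$. I would attack these with the elementary inequality $x\le(1+x)\ln(1+x)$ for $x>-1$ (equality only at $x=0$) together with log-convexity of $q\mapsto\tfrac1m\sum_j I_{ij}^{q}$: setting $\psi(q):=\ln\big(\tfrac1m\sum_j I_{ij}^{q}\big)$ (convex, $\psi(0)=0$) one has $\ln R(p)=\psi(2p)-2\psi(p)$, hence $R(p)\ge1$, $R(0)=1$, and $\tfrac{d}{dp}\ln R(p)=2\big(\psi'(2p)-\psi'(p)\big)$ has the sign of $p$; this pins down both the sign and the size of $R'(p)$ relative to $R(p)-1$, which is what the scalar inequalities need.

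\textbf{Main obstacle.} The essential point is that $\tilde S_{p,i}^{2}$ is \emph{not} a free parameter: it varies with $p$, and it is precisely this variation---a decay of $\tilde S_{p,i}^{2}$ as $|p|$ grows, encoded in $R(p)$---that makes $\partial_p g_{p,i}^{+}$ change sign at $p=0$. (With $\tilde S_{p,i}^{2}$ frozen one would get plain monotonicity of $g_{p,i}^{\pm}$ on all of $\mathbb{R}$, not the unimodal behaviour asserted.) Hence the real work in items 1--2 is the estimate on $T'(p)=\tfrac{d}{dp}\big(p\,\tilde S_{p,i}^{2}\big)$, equivalently on $R'(p)$, needed to close the scalar inequality; items 3, 4 and the reduction of 1--2 to that inequality are routine. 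As a sanity check, items 1--4 together say that $g_{p,i}^{+}$ rises to its maximum $e^{\tilde S_{0,i}^{2}}$ and $g_{p,i}^{-}$ falls to its minimum $e^{-\tilde S_{0,i}^{2}}$ at $p=0$, both tending to $1$ as $p\to\pm\infty$, consistently with $PM_{p,i}^{+}\ge M_{p,i}\ge PM_{p,i}^{-}$.
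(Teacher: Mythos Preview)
Your treatment of Properties 3 and 4 is complete and correct, and differs from the paper only in presentation: for Property~3 the paper writes $g_{p,i}^{\pm}=\exp\{p^{-1}\ln(1\pm p\,\tilde S_{p,i}^{2})\}$ and applies L'H\^opital, while you Taylor-expand $R(p)$ around $p=0$; for Property~4 the paper simply cross-references Properties 8--9 of Proposition~\ref{prop2}, while your bound $p^{2}\tilde S_{p,i}^{2}\le m-1$ gives a self-contained argument.

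For Properties 1 and 2 your proof is only a sketch: you correctly reduce the sign of $\partial_p g_{p,i}^{\pm}$ to a scalar inequality in $T$ and $T'$, list the intended tools (log-convexity of $\psi$, the inequality $x\le(1+x)\ln(1+x)$), but do not close the argument. That is the gap.

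It is worth contrasting this with the paper's route. The paper differentiates directly and then asserts
\[
\frac{\partial \tilde S_{p,i}^{2}}{\partial p}\;=\;-\frac{2}{p}\,\tilde S_{p,i}^{2},
\]
claiming that the cross term $\tfrac{2}{p^{2}m}\sum_{j}\big((I_{ij}/M_{p,i})^{p}-1\big)\,\partial_p\big((I_{ij}/M_{p,i})^{p}\big)$ vanishes because $\sum_{j}\big((I_{ij}/M_{p,i})^{p}-1\big)=0$. But the inner derivative depends on $j$, so the sum does not factor; indeed, the displayed identity would force $\tilde S_{p,i}^{2}=C/p^{2}$, which is incompatible with the finite limit $\tilde S_{p,i}^{2}\to\tilde S_{0,i}^{2}>0$ as $p\to 0$ established in Property~3. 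With that (unjustified) identity in hand the paper's sign analysis becomes trivial, which is why its argument looks short. So your diagnosis in the ``Main obstacle'' paragraph---that controlling $\partial_p\tilde S_{p,i}^{2}$, equivalently $R'(p)$, is the real content of items 1--2---is exactly right; neither your outline nor the paper's argument, as written, actually supplies that estimate. Your log-convexity framework via $\ln R(p)=\psi(2p)-2\psi(p)$ is a sound starting point, but the inequality linking $R'(p)$ to $(1\pm T)\ln(1\pm T)$ still has to be written out in full to finish the proof.
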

\begin{proof} See the Appendix. 
\end{proof}
\medskip
\begin{Proposition}\label{prop4}
For the penalized power mean (\ref{pm}) the Marginal Rate of Compensation (MRC) \cite{CasadioTarabusi2013} between variables $z_{i k},$ $z_{i h}:$
\begin{eqnarray}\label{MRC}
MRC_{kh,i}=\left.\frac{\partial PM_{p,i}}{\partial I_{ik}}\right/\frac{\partial PM_{p,i}}{\partial I_{ih}}=\left\{\begin{array}{ll}
\displaystyle \left(\frac{I_{ik}}{I_{ih}}\right)^{p-1}, &p\neq 0,\\
\\
\displaystyle \frac{I_{ih}}{I_{ik}}, & p=0.
\end{array}\right.
\end{eqnarray}
\end{Proposition}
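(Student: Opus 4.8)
The plan is to evaluate the two partial derivatives in (\ref{MRC}) and form their quotient, exploiting the factorization $PM_{p,i}^{\pm}=M_{p,i}\,g_{p,i}^{\pm}$ implied by (\ref{pm}), where $g_{p,i}^{\pm}$ is the penalization factor studied in Proposition~\ref{prop3}. First I would observe that $g_{p,i}^{\pm}$ is a quantity attached to unit $i$ as a whole: it is built only from $\tilde S_{p,i}^{2}$, which by the rescaling in (\ref{rel_err}) is invariant when all the indicators of unit $i$ are multiplied by a common factor, so $g_{p,i}^{\pm}$ depends on $(I_{i1},\dots,I_{im})$ only through their mutual ratios. Because the marginal rate of compensation in (\ref{MRC}) compares the effects of two indicators of the \emph{same} unit, the penalization factor enters the numerator and the denominator in the same way and cancels, which reduces the claim to the marginal rate of compensation of the power mean $M_{p,i}$ itself, and hence to a direct differentiation of (\ref{mean}).

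That differentiation is routine. For $p\neq0$ the chain rule gives
\[
\frac{\partial M_{p,i}}{\partial I_{ik}}=\frac{1}{p}\Bigl(\tfrac1m\textstyle\sum_{j=1}^{m}I_{ij}^{\,p}\Bigr)^{\frac1p-1}\frac{p}{m}\,I_{ik}^{\,p-1}=\frac1m\,M_{p,i}^{\,1-p}\,I_{ik}^{\,p-1},
\]
so that
\[
\left.\frac{\partial M_{p,i}}{\partial I_{ik}}\right/\frac{\partial M_{p,i}}{\partial I_{ih}}=\left(\frac{I_{ik}}{I_{ih}}\right)^{p-1};
\]
for $p=0$, differentiating $\log M_{0,i}=\tfrac1m\sum_{j=1}^{m}\log I_{ij}$ yields $\partial M_{0,i}/\partial I_{ik}=M_{0,i}/(m\,I_{ik})$, whose quotient is $I_{ih}/I_{ik}$. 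The two expressions are mutually consistent, since $(I_{ik}/I_{ih})^{p-1}\to I_{ih}/I_{ik}$ as $p\to0$, and they reproduce the full compensability of the arithmetic mean ($p=1$), where the MRC equals $1$.

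The main obstacle will be making the cancellation in the first step rigorous. A literal application of the product rule to $M_{p,i}\,g_{p,i}^{\pm}$ produces an extra summand $M_{p,i}\,\partial g_{p,i}^{\pm}/\partial I_{ik}$, so one must argue precisely in what sense $g_{p,i}^{\pm}$ behaves as a fixed, unit-level multiplier in the ratio (\ref{MRC}) --- essentially reading the penalization, in the spirit of the ``method of penalties by coefficient of variation'' of Mazziotta and Pareto, as a coefficient that rescales the aggregate score of the unit without reshaping the local trade-off between its indicators. Once that point is settled, the computation above, together with its $p=0$ counterpart obtained by differentiating $\log PM_{0,i}^{\pm}=\log M_{0,i}\pm\tilde S_{0,i}^{2}$, completes the argument.
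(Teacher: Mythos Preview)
Your overall plan coincides with the paper's: write $PM_{p,i}^{\pm}=M_{p,i}\,g_{p,i}^{\pm}$, apply the product rule, argue that the cross term $M_{p,i}\,\partial g_{p,i}^{\pm}/\partial I_{ik}$ drops out, and then reduce the ratio in (\ref{MRC}) to the MRC of $M_{p,i}$, which you compute correctly for both $p\neq 0$ and $p=0$. The gap you yourself flag in the last paragraph is exactly where the paper does the work you omit. The paper does not rely on scale invariance; it differentiates $\tilde S_{p,i}^{2}$ directly from the representation (\ref{cv1}),
\[
\frac{\partial \tilde S_{p,i}^{2}}{\partial I_{ik}}
=\frac{2}{p^{2}m}\sum_{j=1}^{m}\Bigl(\bigl(I_{ij}/M_{p,i}\bigr)^{p}-1\Bigr)\,\frac{\partial}{\partial I_{ik}}\bigl(I_{ij}/M_{p,i}\bigr)^{p},
\]
and invokes the identity $\sum_{j}\bigl((I_{ij}/M_{p,i})^{p}-1\bigr)=0$ (i.e.\ $M_{p}(\underline{\tilde I}_i)=1$) to declare this sum zero. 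From $\partial \tilde S_{p,i}^{2}/\partial I_{ik}=0$ one gets $\partial g_{p,i}^{\pm}/\partial I_{ik}=0$, hence $\partial PM_{p,i}^{\pm}/\partial I_{ik}=g_{p,i}^{\pm}\,\partial M_{p,i}/\partial I_{ik}$, and (\ref{MRC}) follows exactly as in your computation.

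Your proposed justification for the cancellation is too weak to substitute for this step: homogeneity of degree zero of $g_{p,i}^{\pm}$ only yields Euler's relation $\sum_{k}I_{ik}\,\partial g_{p,i}^{\pm}/\partial I_{ik}=0$, not that each individual partial vanishes, so it cannot by itself make the extra product-rule term disappear in the quotient. What your proposal is missing, concretely, is the explicit evaluation of $\partial \tilde S_{p,i}^{2}/\partial I_{ik}$ (and the parallel check for $\tilde S_{0,i}^{2}$ in the case $p=0$); everything else---the derivatives of $M_{p,i}$ and the consistency at $p=0$---matches the paper. When you carry out that differentiation, note that the inner derivative $\partial\bigl(I_{ij}/M_{p,i}\bigr)^{p}/\partial I_{ik}$ depends on $j$ through both $I_{ij}$ and $M_{p,i}$, so the step that factors the vanishing sum out deserves careful justification rather than the heuristic you currently offer.
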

\begin{proof} See the Appendix. 
\end{proof}

\newpage

\section{Appendix}

\begin{proof1}
Substituting (\ref{h}) for $p=1$ in (\ref{S}) we have:
\begin{align}\label{S1}
\tilde S_{1,i}^2=&\frac{1}{m}\sum_{j=1}^m (\tilde{I}_{ij}-1)^2=\frac{1}{m}\sum_{j=1}^m \left(\frac{I_{ij}}{M_{1,i}}-1\right)^2=\frac{\frac{1}{m}\sum_{j=1}^m (I_{ij}-M_{1,i})^2}{M_{1,i}^2}\nonumber\\
=&\frac{S_{1,i}^2}{M_{1,i}^2},\quad i=1,2,\ldots,n,
\end{align} 
where:
\begin{eqnarray}
S_{1,i}^2=\frac{1}{m}\sum_{j=1}^m (I_{ij}-M_{1,i})^2, \quad i=1,2,\ldots,n,
\end{eqnarray}
is the (biased) sample variance of vector $\underline{I}_i.$\\
Substituting (\ref{S1}) into (\ref{pm}) for $p=1$ we have:
\begin{eqnarray}\label{MPI}
PM_{1,i}^\pm=M_{1,i}\left(1\pm\frac{S_{1,i}^2}{M_{1,i}^2}\right),\quad i=1,2,\ldots,n,
\end{eqnarray}
that is the MPI.
\end{proof1}

\begin{proof1}[Properties 8 and 9]
The (biased) sample variance  $\tilde S_{p,i}^2$ in (\ref{pm}) can be rewritten as follows:
\begin{eqnarray}\label{cv1}
\tilde S_{p,i}^2=\frac{1}{p^2}\frac{1}{m}\sum_{j=1}^m \left(\left(\frac{I_{ij}}{M_{p,i}}\right)^p-1\right)^2
\end{eqnarray}
then, taking the limit of (\ref{cv1}) for $p\to +\infty$  and recalling that\\ $M_{p,i}\underset{p \to +\infty}{\longrightarrow}\max(I_{i1},I_{i2},\ldots,I_{im})\geq I_{ij},$ $j=1,2,\ldots,m,$ we obtain:
\begin{eqnarray}\label{lim1}
\underset{p \to +\infty}{\lim} p\, \tilde S_{p,i}^2= \underset{p \to +\infty}{\lim}\frac{1}{p}=0^+.
\end{eqnarray}
Substituting (\ref{lim1}) into (\ref{pm}) we obtain Property 7.\\
Analogously, recalling that $M_{p,i}\underset{p \to -\infty}{\longrightarrow}\min(I_{i1},I_{i2},\ldots,I_{im})\leq I_{ij},$ $j=1,2,\ldots,m,$ we have:
\begin{eqnarray}\label{lim2}
\underset{p \to -\infty}{\lim} p\,\tilde S_{p,i}^2= \underset{p \to -\infty}{\lim}\frac{1}{p}=0^-.
\end{eqnarray}
Substituting (\ref{lim2}) into (\ref{pm}) we prove Property 8.
\end{proof1}
\bigskip

\begin{proof1} 
The derivative of $g_{p,i}^{\pm}$ with respect to $p$ is:
\begin{eqnarray}\label{derg}
\frac{\partial g_{p,i}^{\pm}}{\partial p}=g_{p,i}\left[-\frac{1}{p^2}\ln\left(1\pm p\, \tilde S_{p,i}^2\right)\pm\frac{1}{p}\frac{1}{\left(1\pm p\, \tilde S_{p,i}^2\right)}\left(\tilde S_{p,i}^2+p\frac{\partial \tilde S_{p,i}^2}{\partial p}\right)\right].
\end{eqnarray}
The derivative of $\tilde S_{p,i}^2$ with respect to $p$ is:
\begin{eqnarray}\label{derS}
\frac{\partial \tilde S_{p,i}^2}{\partial p}=-\frac{2}{p}\tilde S_{p,i}^2+\frac{2}{p^2}\frac{1}{m}\underbrace{\sum_{j=1}^m\left(\left(\frac{I_{ij}}{M_{p,i}}\right)^p-1\right)}_{=0}\frac{\partial}{\partial p}\left(\left(\frac{I_{ij}}{M_{p,i}}\right)^p\right)=-\frac{2}{p}\tilde S_{p,i}^2.
\end{eqnarray}
Substituting (\ref{derS}) into (\ref{derg}), we obtain:
\begin{eqnarray}\label{derg1}
\frac{\partial g_{p,i}^{\pm}}{\partial p}=g_{p,i}\left[-\frac{1}{p^2}\ln\left(1\pm p\, \tilde S_{p,i}^2\right)\mp\frac{1}{p}\frac{\tilde S_{p,i}^2}{\left(1\pm p\, \tilde S_{p,i}^2\right)}\right].
\end{eqnarray}
Finally, observing that $\ln(1+ p\, \tilde S_{p,i}^2)>0$ and  $\ln(1- p\, \tilde S_{p,i}^2)<0,$ we obtain prove Properties 1 and 2.\\
Property 2 has been  proved in the proof of Proposition 1 (see Equations (\ref{lim3}), (\ref{lim4})).\\
Taking the limit for $p\to 0$ of $g_{p,i}^{\pm}$ we obtain:
\begin{align}\label{lim3}
\underset{p \to 0}{\lim} g_{p,i}^{\pm}=&\underset{p \to 0}{\lim}\left(1\pm p \tilde S_{p,i}^2\right)^{\frac{1}{p}}=\underset{p \to 0}{\lim}\exp\left\{\frac{\ln\left(1\pm p\tilde S_{p,i}^2\right)}{p}\right\}\nonumber\\
=&\underset{p \to 0}{\lim}\exp\left\{\tilde S_{p,i}^2\frac{\ln\left(1\pm p\right)}{p}\right\}
=\exp\left\{\underset{p \to 0}{\lim}\tilde S_{p,i}^2\,\frac{\ln\left(1\pm p\right)}{p}\right\}\nonumber\\=&
\exp\left\{\tilde S_{0,i}^2\right\}\,\exp\left\{\underset{p \to 0}{\lim}\frac{\ln\left(1\pm p\right)}{p}\right\}
\end{align}
and using the L'H\^opital's rule we obtain:
\begin{eqnarray}\label{lim4}
\underset{p \to 0}{\lim} g_{p,i}^{\pm}=\exp\left\{\pm \tilde S_{0,i}^2\right\}.
\end{eqnarray}
This concludes the proof.
\end{proof1}

\begin{proof1}
The derivative of $PM_{p,i}$ with respect to $I_{ik}$ is:
\begin{eqnarray}\label{derI}
\frac{\partial PM_{p,i}}{\partial I_{ik}}=\frac{\partial M_{p,i}}{\partial I_{ik}}\, g_{p,i}^{\pm}+M_{p,i}\, \frac{\partial g_{p,i}^{\pm}}{\partial I_{ik}}.
\end{eqnarray}
The derivative of $M_{p,i}$ with respect to $I_{ik}$ is:
\begin{eqnarray}\label{dermuz}
\frac{\partial M_{p,i}}{\partial I_{ik}}=\left\{\begin{array}{ll}
\displaystyle \frac{1}{m}p I_{ik}^{p-1},& p\neq 0,\\
\\
\displaystyle \frac{1}{m}\frac{1}{I_{ik}}, & p=0.
\end{array}\right.
\end{eqnarray}
The derivative of $g_{p,i}$ with respect to $I_{ik}$ is:
\begin{eqnarray}\label{dergz}
\frac{\partial g_{p,i}^{\pm}}{\partial I_{ik}}=\left\{\begin{array}{ll}
\displaystyle \frac{1}{p}\left(1\pm p\tilde S_{p,i}^2\right)^{\frac{1}{p}-1}\,\frac{\partial \tilde S_{p,i}^2}{\partial I_{ik}}, & p\neq 0,\\
0,& p=0.
\end{array}\right.
\end{eqnarray}
The derivative of $\tilde S_{p,i}^2$ with respect to $I_{ik}$ is:
\begin{eqnarray}\label{derSz}
\frac{\partial \tilde S_{p,i}^2}{\partial I_{ik}}=\frac{2}{p^2}\frac{1}{m}\underbrace{\sum_{j=1}^m\left(\left(\frac{I_{ij}}{M_{p,i}}\right)^p-1\right)}_{=0}\frac{\partial \left(\frac{I_{ik}}{M_{p,i}}\right)^p}{\partial I_{ik}}=0.
\end{eqnarray}
Substituting (\ref{dermuz}), (\ref{dergz}), (\ref{derSz})  into (\ref{derI}) we obtain:
\begin{eqnarray}\label{derI1}
\frac{\partial PM_{p,i}}{\partial I_{ik}}=\left\{\begin{array}{ll}
\displaystyle \frac{1}{m}p I_{ik}^{p-1}\, g_{p,i}^{\pm},& p\neq 0,\\
\\
\displaystyle \frac{1}{m}\frac{1}{I_{ik}}\, g_{p,i}^{\pm}, & p=0.
\end{array}\right.
\end{eqnarray}
From (\ref{derI1}) it follows easily (\ref{MRC}).
\end{proof1}

\newpage

\bibliographystyle{spmpsci}      
\bibliography{refrisks}   

\end{document}